\newcommand{\Tube}[6][]%
{   \colorlet{InColor}{#4}
    \colorlet{OutColor}{#5}
    \foreach \I in {1,...,#3}
    {   \pgfmathsetlengthmacro{\h}{(\I-1)/#3*#2}
        \pgfmathsetlengthmacro{\r}{sqrt(pow(#2,2)-pow(\h,2))}
        \pgfmathsetmacro{\c}{(\I-0.5)/#3*100}
        \draw[InColor!\c!OutColor, line width=\r, #1] #6;
    }
}
\newcommand{\matindex}[1]{\mbox{\scriptsize#1}}
\definecolor{light-gray}{gray}{0.9}
\newtheorem{remark}{Remark}
	\newtheorem{theorem}{Theorem}%
	\newtheorem{example}{Example}
	\newcommand\eat[1]{}
	\newlength{\wordlength}
	\newcommand{\wordbox}[3][c]{\settowidth{\wordlength}{#3}\makebox[\wordlength][#1]{#2}}
	\newcommand{\set}[1]{\{#1\}}
	\newcommand{\midd}{\mathbin{:}}
	\newcommand{\eqclass}[2][]{\ifthenelse{\equal{#1}{}}{[#2]}{[#2]_{\sim_{#1}}}}
	\newcommand{\ceil}[1]{\lceil #1 \rceil }
	\newcommand{\pref}{\succsim\xspace}
		\newcommand{\spref}{\succ\xspace}
\newcommand{\nbh}[1][]{
	\ifthenelse{\equal{#1}{}}{\nu}{\nu(#1)}
}
\newcommand{\cstr}[1][]{
	\ifthenelse{\equal{#1}{}}{\mathscr S}{\cstr(#1)}
}
\newcommand{\choice}[1][]{
	\ifthenelse{\equal{#1}{}}{\mathit{C}}{\choice(#1)}

		\newcommand{\ml}[1][]{\ensuremath{\ifthenelse{\equal{#1}{}}{\mathit{ML}}{\mathit{ML}(#1)}}\xspace}
		\newcommand{\sml}[1][]{\ensuremath{\ifthenelse{\equal{#1}{}}{\mathit{SML}}{\mathit{SML}(#1)}}\xspace}
		\newcommand{\sd}[1][]{\ensuremath{\ifthenelse{\equal{#1}{}}{\mathit{SD}}{\mathit{SD}(#1)}}\xspace}
		\newcommand{\rsd}[1][]{\ensuremath{\ifthenelse{\equal{#1}{}}{\mathit{RSD}}{\mathit{RSD}(#1)}}\xspace}
		\newcommand{\rd}[1][]{\ensuremath{\ifthenelse{\equal{#1}{}}{\mathit{RD}}{\mathit{RD}(#1)}}\xspace}
		\newcommand{\st}[1][]{\ensuremath{\ifthenelse{\equal{#1}{}}{\mathit{ST}}{\mathit{ST}(#1)}}\xspace}
		\newcommand{\bd}[1][]{\ensuremath{\ifthenelse{\equal{#1}{}}{\mathit{BD}}{\mathit{BD}(#1)}}\xspace}
		\newcommand{\pc}[1][]{\ensuremath{\ifthenelse{\equal{#1}{}}{\mathit{PC}}{\mathit{PC}(#1)}}\xspace}
		\newcommand{\dl}[1][]{\ensuremath{\ifthenelse{\equal{#1}{}}{\mathit{DL}}{\mathit{DL}(#1)}}\xspace}
		\newcommand{\ul}[1][]{\ensuremath{\ifthenelse{\equal{#1}{}}{\mathit{UL}}{\mathit{UL}(#1)}}\xspace}

			\newcommand{\indiff}{\ensuremath{\sim}}}
    		\renewcommand{\algorithmicrequire}{\wordbox[l]{\textbf{Input}:}{\textbf{Output}:}} 
    		 \renewcommand{\algorithmicensure}{\wordbox[l]{\textbf{Output}:}{\textbf{Output}:}}
\begin{document}

		
				
								
																
																																\title{Simultaneously Achieving Ex-ante and Ex-post Fairness}
		
		\author{Haris Aziz}\ead{haris.aziz@unsw.edu.au}
		
		\address{UNSW Sydney and Data61 CSIRO, Australia}

		\begin{keyword}
		fair division; fairness; efficiency; Pareto optimality; randomisation; envy-freeness\\
			\emph{JEL}: C62, C63, and C78
		\end{keyword}

		    \begin{abstract}
We present a polynomial-time algorithm that computes an ex-ante envy-free lottery over envy-free up to one item (EF1) deterministic allocations. It has the following advantages over a recently proposed algorithm: it does not rely on the linear programming machinery including separation oracles; 
it is SD-efficient (both ex-ante and ex-post); and the ex-ante outcome is equivalent to the outcome returned by the well-known probabilistic serial rule. As a result, we answer a question raised by Freeman, Shah, and Vaish (2020) whether the outcome of the probabilistic serial rule can be implemented by ex-post EF1 allocations. In the light of a couple of impossibility results that we prove, our algorithm can be viewed as satisfying a maximal set of properties. 
Under binary utilities, our algorithm is also ex-ante group-strategyproof and ex-ante Pareto optimal. Finally, we also show that checking whether a given random allocation can be implemented by a lottery over EF1 and Pareto optimal allocations is NP-hard. 
 	\end{abstract}


		\maketitle


			\sloppy
		

\section{Introduction}

Who gets what is a significant and ubiquitous issue. When making any kind of allocation among self-interested agents, fairness is an important concern. Does a fair allocation exist? Is there an efficient algorithm to compute such an allocation? These are important questions that have been studied in fair division for decades. In this paper, we consider the issue of finding probabilistic allocations that are ex-ante and ex-post fair. 

Suppose there are two agents who have additive utilities over three items $a,b,c$. Both agents have the highest value for items $a$, then $b$, and then $c$. From an ex-ante perspective, envy-freeness can be achieved by giving each item to each agent with probability half. However, there are many ways to achieve this expected probability, some perhaps not too fair. For example, the uniform lottery over the following two deterministic allocations: $(\{a,b,c\},\emptyset)$ and $(\emptyset,\{a,b,c\})$.
It may be desirable to achieve both ex-ante envy-freeness and some weaker form of ex-post envy-freeness. For example a uniform lottery over the following allocations is fairer ex-post: $(\{a\},\{b,c\})$ and $(\{b,c\},\{a\})$.

%
%

As seen from the example above, achieving target fairness properties is easy when we consider fractional outcomes or view outcomes from an ex-ante perspective. Implementing such desirable ex-ante outcomes by randomizing over desirable deterministic outcomes can pose interesting challenges (see, e.g. \citep{Aziz18a,BCKM12a}). 
This issue was explored by \citet{FSV20a}. They focussed on ex-ante envy-freeness and ex-post envy-freeness up to one item as the target fairness requirements. Both of the properties are known to be individually achievable. 
An ex-ante envy-free random allocation always exists (for example the outcome of the probabilistic serial rule of \citet{BoMo01a} achieves ex-ante envy-freeness). Similarly, a deterministic envy-free up to one item (EF1) allocation always exists~\citep{Budi11a}. For example, running the round robin sequential algorithm obtains an EF1 allocation~\citep{CKM+16a}. \citet{FSV20a} explore the question of achieving ex-ante envy-freeness and ex-post EF1 \emph{simultaneously}. They showed that there exists a polynomial-time algorithm to compute a lottery over envy-free up to one item allocations that is also ex-ante envy-free.\footnote{\citet{FSV20a} also presented several other results charting the landscape of possibility and impossibility results when considering fairness and efficiency properties ex post and ex-ante. In particular, they study in detail the rule that maximizes ex-ante  Nash welfare. However, they show that the rule cannot be implemented by EF1 allocations. }

The inventive polynomial-time algorithm of \citet{FSV20a} has a couple of possible limitations. Firstly, it requires using the machinery of linear programming separation oracles. It may be desirable to get similar results by simpler combinatorial algorithms. 
Secondly, the algorithm of \citeauthor{FSV20a}  is not ex-post weakly SD (stochastic dominance)-efficient and hence not ex-ante weakly SD-efficient. This is evident from Example 2 of \citeauthor{FSV20a} where they note that their algorithm does not satisfy ordinal efficiency.\footnote{SD-efficiency is also referred to as ordinal efficiency in the literature~\citep{BoMo01a}. } 
The fact that an algorithm is not ex-post weakly SD-efficient implies that it can return a deterministic allocation such that there exists another deterministic allocation that gives each agent strictly more utility for all utility functions consistent with the underlying ordinal preferences. Another implication of violating ex-post weak SD-efficiency
is that all the agents can trade one of their items for another item to get more utility. 
Such unamiguous compromise on welfare can be undesirable. 
For example,  the random serial dictatorship rule (which is ex-post SD-efficient) has received criticism that it is not ex-ante SD-efficient~\citep{BoMo01a}.


We overcome the two limitations discussed above and show that the algorithmic result of \citet{FSV20a} can be achieved in a relatively simpler and faster way while additionally satisfying SD-efficiency. 
To the best of our knowledge, our is the first algorithm to simultaneously satisfy weak SD-efficiency, ex-ante EF, and ex-post EF1. 
The latter two guarantees even hold for all additive utilities consistent with the agents' underlying ordinal preferences. In other words, our algorithm satisfies ex-ante SD-envy-freeness and ex-post SD-EF1. We also show how the algorithm can be further modified by using parametric network flows to additionally achieve both ex-ante and ex-post SD-efficiency. Our results can be viewed as being optimal in the view of the following two impossibility results that we prove. 
Firstly, ex-ante SD-envy-freeness, ex-post EF1, and ex-post Pareto optimality are incompatible. Secondly, ex-ante Pareto optimality and ex-ante SD-envy-freeness  are incompatible.

Our algorithm calls the probabilistic serial algorithm as well as the Birkhoff's decomposition algorithm as subroutines.
\citeauthor{FSV20a} raised the question whether the outcome of the probabilistic serial algorithm can be implemented using ex-post EF1 randomized allocations: ``\emph{we were not able to determine whether the fractional allocation produced by probabilistic
						serial can always be implemented using an ex-post EF1 randomized allocation.}'' We answer the question in the affirmative: our algorithm's outcome is ex-ante equivalent to the outcome of the probabilistic serial rule. In particular, it can be viewed as a desirable way to instantiate the probabilistic serial outcome. 
Under binary utilities, our algorithm is group-strategyproof, ex-ante efficient, ex-ante envy-free, and ex-post EF1. Finally, we also show that checking whether a given random allocation can be represented over a lottery over EF1 and Pareto optimal allocations is NP-hard.

%

\section{Preliminaries}

An allocation problem is a triple $(N,O,u)$ such that $N=\{1,\ldots, n\}$ is the set of agents, $O=\{o_1,\ldots, o_m\}$ is the set of objects, and
$u$ specifies an additive utility function $u_i:O \rightarrow \mathbb{R^+}$. The utility function profile $u$ induces 
the preference profile $\pref=(\pref_1,\ldots, \pref_n)$ which specifies for each agent $i$ his preferences $\pref_i$ over objects in $O$ such that $o \pref_i o'$ if and only if  $u_i(o)\geq u_i(o')$.
We use~$\spref_i$ for the strict part of~$\pref_i$, i.e.,~$o \spref_i o'$ iff~$o \pref_i o'$ but not~$o' \pref_i o$. 
A random allocation $p$ is a $(n\times m)$ matrix $[p_{i,o_j}]$ such that $ p_{i,o_j} \in [0,1]$ for all $i\in N$, and $o_j\in O$; and $\sum_{i\in N}p_{i,o_j}= 1$ for all $o_j\in O$. For a given set $S\subset N$, we will refer by $\pref_S$ the preference profile restricted to agents in $S$.


The value $p_{i,o_j}$ represents the probability of object $o_j$ being allocated to  agent $i$. Each row $p_i=(p_{i,o_1},\ldots, p_{i,o_m})$ represents the allocation of agent $i$. 
The set of columns correspond to the objects $o_1,\ldots, o_m$.
A feasible random allocation is \emph{deterministic} if $p_{i,o}\in \{0,1\}$ for all $i\in N$ and $o\in O$. When we say `an allocation', we will mean random allocation unless we specially specify it is deterministic. 


For any agent $i,j\in N$ and an allocation $p$, the utility of agent $i$ for a bundle $p_j$ is $u_i(p_j)=\sum_{o\in O}p_{j,o}u_i(o)$. 
Given two random allocations $p$ and $q$, $p_i \succsim_i^{SD} q_i$ that is, an agent $i$ \emph{SD~prefers} allocation $p_i$ to allocation $q_i$ if 
$\sum_{o_j\in \set{o_k\midd o_k\succsim_i o}}p_{i,o_j} \ge \sum_{o_j\in \set{o_k\midd o_k\succsim_i o}}q_{i,o_j} \text{ for all } o\in O.$
We write $p_i \succ_i^{SD} q_i$ if $p_i \succsim_i^{SD} q_i$ and not $q_i \succsim_i^{SD} p_i$.	

\paragraph{Fairness Properties}

A random allocation $p$ is \emph{SD-envy-free} if for all $i,j\in N$, $p_i \succsim_i^{SD} p_j$.
An random allocation $p$ is \emph{envy-free (EF)} if 
$u_i(p_i)\geq u_i(p_j)$ for all $i,j\in N$. 
For an agent's allocation $p_j$, we will denote by $p_j^{-o}$ the allocation $p_j$ in which $p_{j,o}$ is set to $0$.
For an agent's allocation $p_j$ and $S\subseteq O$, we will denote by $p_j^{-S}$ the allocation $p_j$ in which $p_{j,o}$ is set to $0$ for all $o\in S$.
A random allocation $p$ is \emph{SD-EF1} if for all $i,j\in N$, 
either $p_i \succsim_i^{SD} p_j$ or $p_i \succsim_i^{SD} p_j^{-o}$ for some
$o$.
$o$. 
A random allocation $p$ is \emph{envy-free up to $k$ items (EFk)} if there exist some $S\subset O$ such that $|S|\leq k$ such that 
$u_i(p_i^{-S})\geq u_i(p_j^{-S})$. 
Note that SD-envy-freeness implies envy-freeness which implies EFk. And SD-EF implies SD-EFk. 

%

\medskip

A given random allocation can be implemented by a lottery over deterministic allocations.\footnote{The statement follows from the well-known Carath{\'{e}}odory's Theorem. } We call the latter an implementation of the given random allocation.
We say that random allocation $p$ satisfies a property $X$ \emph{ex-ante} if the fractional allocation representing $p$ satisfies property $X$. 
When we discuss the ex post properties of a random allocation $p$, we will also need to consider the lottery implementation over deterministic allocations which achieves the random allocation $p$. In that case we say that random assignment with a lottery implementation deterministic allocations over $M_1,\ldots, M_K$ satisfies property $X$ \emph{ex-post} if $M_1,\ldots, M_K$ satisfy property $X$. 
Therefore for any given property for allocations, we consider it ex-ante as well as ex-post. Figure~\ref{fig:fairrelations} shows the key fairness concepts that are appropriate from ex-ante and ex-post perspectives. Note that we do not focus ex-post envy-freeness since a deterministic envy-free allocation is not guaranteed to exist. 
Furthermore, checking whether a deterministic envy-free allocation exists is NP-complete even for 1-0 utilities~\citep{AGMW15a}.

\begin{figure}[h!]
    \begin{center}
        \scalebox{0.85}{
\begin{tikzpicture}
  \tikzstyle{onlytext}=[]
  
  \tikzset{venn circle/.style={circle,minimum width=0mm,fill=#1,opacity=0.1}}

  \node[onlytext] (ex-ante fair) at (0,0) {\begin{tabular}{c}\textbf{ex-ante}\\\textbf{fairness}\end{tabular}};
  
   \draw[-, line width=1pt] (-1,-1) -- (4,-1) ;
  
  \node[onlytext] (ex-ante SD-EF) at (0,-2) {\begin{tabular}{c}{ex-ante}\\{SD-EF}\end{tabular}};
    \node[onlytext] (ex-ante EF) at (0,-4) {\begin{tabular}{c}{ex-ante}\\{EF}\end{tabular}};
	
    \node[onlytext] (ex-post fair) at (3,0) {\begin{tabular}{c}\textbf{ex-post}\\\textbf{fairness}\end{tabular}};
  
    \node[onlytext] (ex-post SD-EF1) at (3,-2) {\begin{tabular}{c}{ex-post}\\{SD-EF1}\end{tabular}};
      \node[onlytext] (ex-post EF1) at (3,-4) {\begin{tabular}{c}{ex-post}\\{EF1}\end{tabular}};

 %
 %
 %
 %

%
%
%
%
%
\draw[->, line width=1pt] (ex-ante SD-EF) -- (ex-ante EF) ;
\draw[->, line width=1pt] (ex-post SD-EF1) -- (ex-post EF1) ;
%
%

\end{tikzpicture}
 }
\end{center}
\caption{\label{fig:fairrelations} Logical relations between fairness concepts.  
}
\end{figure}

\begin{example}
	Consider the example in which $N=\{1,2\}$, $O=\{a,b,c,d\}$ and 
	the agents have the following utilities over four items.

		\begin{center}
		\setlength{\tabcolsep}{6pt}
		\begin{tabular}{ccccccccc}
			& $a$& $b$& $c$ & $d$\\
			\midrule
			$1$ &$4$&$3$&$2$&$1$\\
			$2$ &$4$&$2$&$3$&$1$\\
		\end{tabular}
	\end{center}

	%
	
	Then, the following is one possible random allocation.
		\begin{center}$p=$
		\begin{blockarray}{ccccccccccc}
			&&\matindex{$a$}&\matindex{$b$}& \matindex{$c$}& \matindex{$d$}&\\
		    \begin{block}{c(cccccccccc)}
				\matindex{$1$}& &$\nicefrac{1}{2}$&$1$&$0$&$\nicefrac{1}{2}$\\
				\matindex{$2$}& &$\nicefrac{1}{2}$&$0$&$1$&$\nicefrac{1}{2}$\\
		    \end{block}
		  \end{blockarray}
		  \end{center}
		  
		  In the allocation, $u_1(p_1)=\frac{1}{2}(4)+ 1(3)+\frac{1}{2}(1)=5.5$ and
		  $u_1(p_2)=\frac{1}{2}(4)+1(2)+\frac{1}{2}(1)=4.5$. Hence agent 1 is not envious of agent 2.
		  
		  
		  Allocation $p$ can be implemented by the following uniform lottery over two deterministic allocations as follows. 
		  \begin{center} $p~~= ~~\frac{1}{2}$
		 	\begin{blockarray}{ccccccccccc}
		 		&&\matindex{$a$}&\matindex{$b$}& \matindex{$c$}& \matindex{$d$}&\\
		 	    \begin{block}{c(cccccccccc)}
		 			\matindex{$1$}& &$1$&$1$&$0$&$0$\\
		 			\matindex{$2$}& &$0$&$0$&$1$&$1$\\
		 	    \end{block}
		 	  \end{blockarray}
		 	 $+~~~\frac{1}{2}$
		   	\begin{blockarray}{ccccccccccc}
		   		&&\matindex{$a$}&\matindex{$b$}& \matindex{$c$}& \matindex{$d$}&\\
		   	    \begin{block}{c(cccccccccc)}
		   			\matindex{$1$}& &$0$&$1$&$0$&$1$\\
		   			\matindex{$2$}& &$1$&$0$&$1$&$0$\\
		   	    \end{block}
		   	  \end{blockarray}
		   	  \end{center}
	
	\end{example}



We say that a deterministic allocation $q$ is \emph{consistent} with a random allocation $p$ if for each $q_{i,o}=1$, we have that  $p_{i,o}>0$.		
For $n=m$, a deterministic allocation can be represented by a permutation matrix in which an entry of one denotes the row agent getting the column object.
A \emph{decomposition} of a random allocation $p$ is a sum $\sum_{i=1}^k \lambda_iP_i$ such that $\lambda_i\in (0,1]$ for $i\in \{1,\ldots,k\}$, $\sum_{i=1}^k\lambda_i=1$, and each $P_i$ is a permutation matrix (consistent with $p$).

\section{The PS-Lottery Algorithm}

In this section, we present our main algorithm that we refer to as the PS-Lottery Algorithm. Before we proceed, 
we summarize two well-known algorithms that we will use as building blocks for our algorithm to simultaneously achieve ex-ante EF and ex-post EF1.

\paragraph{Probabilistic Serial (PS) Algorithm}

The PS rule~\citep{BoMo01a} takes as input the strict ordinal preferences of agents over items as well as the available amounts of each of the items. Agents start eating their most preferred item at unit speed until the item is consumed. They continue eating their most preferred items until all the items are consumed. The outcome is a random allocation in which each agent's probability of getting an item is the fraction of the item that she ate. 
Intially, only presented for the case of single-unit demands, the rule extends seamlessly for the case where agents want to get multiple items~\citep{Koji09a}. 
Although described as a continuous rule where agents eat infinitesimal amounts, the PS outcome can be computed by a discrete algorithm in polynomial time $O(nm)$ (see the appendix).

\paragraph{Birkhoff's Algorithm}

Consider any random allocation with $n$ agents and $n$ items in which each agent gets one unit of items. Birkhoff's algorithm can decompose such a random allocation (which can be  represented by a bistochastic matrix) into a convex combination  of at most $n^2-n+1$ deterministic allocations (represented by permutation matrices)~\citep{Birk46a,LoPl09a}. The following is a description of Birkhoff's algorithm (a formal description is given in the appendix). We initialize $i$ to $1$. For a bistochastic matrix $M$, a permutation matrix $P_i$ consistent with $M$ is guaranteed to exist. Such a permutation matrix corresponds to a perfect matching in a bipartite graph $(N\cup O,E)$ where $(i,o)\in E$ iff $M_{i,o}>0$. Such a perfect matching and hence the permutation matrix can be computed via the Hopcroft-Karp-Karzanov algorithm which takes time $O(n^{2.5})$~\citep{HoKa73a,Karz73a}.
We initialize index $i$ to 1. $M$ is set to $M-\lambda_i{P_i}$ where $\lambda_i\in (0,1]$ is
the smallest non-zero entry in $P_i$. Index $i$ is incremented by one. The updated $M$ is again bistochastic. The process is repeated (say $k-1$ times) until $M$ is the zero matrix. Then $M=\sum_{i=1}^k \lambda_iP_i$. 

\medskip
Now that we have defined the two algorithms, we are in a position to present  Algorithm~\ref{algo:EFEF1}. The high-level description of the algorithm is as follows. 
We first add some dummy items to ensure that there are $nc$ items.
The expanded set of items is called $O'$.
We then simulate PS. We track information about how much of each item has been eaten at time steps $1,\ldots, c$. We use this information to form an allocation $q'$ of items in $O'$ to agents in $N'=\{i_1,\ldots, i_{c}\midd i\in N\}$. 
The agents $i_1,\ldots, i_{c}$ are called the representative agents of each agent $i$. An agent $i_j$'s allocation is what agent $i$ ate in time interval $[j-1,j]$.
Allocation $q'$ can be represented by a bistochastic matrix. We decompose $q'$ into a convex combination of permutation matrices via Birkhoff's algorithm. The permutation matrices are suitably modified to remove the dummy items and also give the allocation of all representatives to the agent they represent. The convex combination over the modified permutation matrices gives us the desired solution, which is both ex-ante EF and ex-post EF1. 

										\begin{algorithm}[h!]
											  \caption{PS-Lottery Algorithm}
											  \label{algo:EFEF1}
			\normalsize
											\begin{algorithmic}
												\REQUIRE  $I=(N,O,\pref)$ where $|N|=n$, $|O|=m$ and $c=\ceil{m/n}$. 							\ENSURE EF fractional allocation $q=\sum_{j=1}^{K}\lambda_jP_i$ where each $P_j$ represents a deterministic EF1 allocation and $K\leq (cn)^2-2cn+2$. 
											\end{algorithmic}
											\begin{algorithmic}[1]
												\normalsize
			 			 \STATE If $m$ is a multiple of $n$, $D=\emptyset$. Else, $D=\{d_1,\ldots, d_{nc-m}\}$. 
			 \STATE $O'\leftarrow O \cup D$ so that $|O'|=cn$.
			 \STATE $N'=\{i_1,\ldots, i_{c}\midd i\in N\}$. 
			 The agents $i_1,\ldots i_c$ are termed as the representatives of agent $i$.
			 \STATE 
		 Set preference profile $\pref'$ of agents in $N'\cup N$ as follows: 	 for all $o,o'\in O$ and for all $i_j$ for $j\in \{1,\ldots, c\}$, $o \pref_{i_j}' o'$ iff $o \pref_i o'$. For all $o\in O$ and $d\in D$, $o \succ_{i_j}' d$. All the ties in $\pref'$ are broken lexicographically. 
			 %
			 \STATE Run PS on instance $(N,O',\pref_N')$ to get a random outcome $r$. \label{step:PS}
			 \STATE For each bundle $r_i$, let agent $i$ re-eat her bundle at unit-speed according to preferences of her representative agents $\pref_{i_k}'$ with each representative agent $i_j$ eating on behalf of agent $i$ in time interval $[j-1,j]$. Let the result of this eating be allocation $q'$ which is an allocation of items $O'$ to agent representatives in $N'$. 
			 \STATE For the (bistochastic) matrix corresponding to $q'$, compute a Birkhoff decomposition 
$q'=\sum_{j=1}^{K}\lambda_jP_j'$ where $K\leq (cn)^2-2cn+2$.
\STATE Convert 	$q'=\sum_{j=1}^{K}\lambda_jP_j'$ into $q=\sum_{j=1}^{K}\lambda_jP_j$ where all the dummy items are ignored and each agent gets the allocation of its representatives. 		 
												\RETURN Allocation $q$ for instance $I$ and its decomposition $\sum_{j=1}^{K}\lambda_jP_j$.
											\end{algorithmic}
										\end{algorithm}

Before we prove the main properties of the PS-Lottery Algorithm, we recall a class of deterministic allocation algorithms.
The \emph{sequential allocation} algorithm takes as input a sequence $\pi$ of turns of the agents and returns a deterministic allocation which is a result of agents picking a most preferred unallocated item in their turn. 
A sequence of turns is called \emph{recursively balanced (RB)} if at each prefix, all agents have the same number of turns, or differ by one.
An RB sequence can  be viewed as agents coming in $c$ rounds. Note that $cn\leq (m+n)$.
In each round except the last one, each agent gets exactly one turn. Since each agent weakly prefers her picked item over all items picked in later rounds, 
it can easily be proved that the outcome of sequential allocation with an RB sequence is EF1~\citep{AHMS19a}.\footnote{In fact an RB allocation satisfies a stronger propery called strong EF1. Stronger EF1 requires that upon removing the same item from agent $i$'s bundle, no other agent $j$ envies $i$, for all $i$ and $j$. 
The property was proposed by \citet{CFS+19}. } Since sequential allocation with an RB sequence only uses ordinal preferences of the agents, it is EF1 with respect to all positive utilities consistent with the ordinal preferences~\citep{AHMS19a} and hence SD-EF. An allocation is called an \emph{RB-allocation} if it is an outcome of sequential allocation with respect to some RB-sequence. We will use the perspective of RB-allocations to establish that our algorithm returns a lottery over EF1 allocations.

\begin{theorem}Let $c=\ceil{m/n}$.
	Algorithm~\ref{algo:EFEF1} is polynomial-time algorithm that takes time $O(({cn})^4)$ that computes a lottery over at most $({cn})^2$ 
deterministic EF1 allocations that is equivalent to the outcome of the probabilistic serial algorithm. 
	\end{theorem}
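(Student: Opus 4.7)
The plan is to verify in sequence four claims: polynomial running time of $O((cn)^4)$; decomposition into at most $(cn)^2$ deterministic allocations; ex-post EF1 of each such allocation; and ex-ante equivalence with the probabilistic serial rule. The first two are bookkeeping. PS on $(N, O', \pref_N')$ runs in $O(n \cdot cn)$; the bistochastic matrix $q'$ can be built from $r$ by simulating the re-eating in $O((cn)^2)$; and Birkhoff's decomposition of the $(cn) \times (cn)$ bistochastic $q'$ produces at most $(cn)^2 - 2cn + 2$ permutation matrices, each iteration dominated by a Hopcroft--Karp--Karzanov matching in $O((cn)^{2.5})$, which fits within the claimed $O((cn)^4)$. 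Bistochasticity of $q'$ itself is immediate: each representative eats unit mass in her time window, and every item in $O'$ is fully consumed because $|O'| = cn$ matches the total eating capacity of PS on $(N, O', \pref_N')$.

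The heart of the proof is EF1, and the main obstacle is a structural lemma linking each Birkhoff permutation to the PS eating dynamics. Fix a permutation matrix $P'$ in the decomposition of $q'$ and let $\sigma(i_k) \in O'$ denote the item $P'$ assigns to representative $i_k$. The key lemma I would prove is: for every $i, j \in N$ and every pair of rounds $k < l$, one has $\sigma(i_k) \pref_i \sigma(j_l)$. The underlying observation is that each agent's original-PS schedule coincides with her re-eating schedule, because in PS she eats at unit speed and visits the items she actually receives in strict $\pref$-order (strictness comes from the lexicographic tie-breaking in step 4). Consequently, whenever $r_{j, \sigma(j_l)} > 0$, the global exhaustion time $t^*(\sigma(j_l))$ equals $j$'s finishing time on that item in PS, and the fact that representative $j_l$ receives a positive share of $\sigma(j_l)$ in $q'$ forces $t^*(\sigma(j_l)) > l - 1 \ge k$. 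During some moment $t \in [k-1, k]$ agent $i$ is eating $\sigma(i_k)$ as her $\pref_i$-most-preferred available item; because $t \le k < t^*(\sigma(j_l))$, item $\sigma(j_l)$ is still available at $t$, which forces $\sigma(i_k) \pref_i \sigma(j_l)$.

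Given the key lemma, EF1 follows by collapsing representatives back to original agents. For any pair $i, j \in N$, remove $\sigma(j_1)$ from $j$'s bundle; the bijection $\sigma(i_k) \mapsto \sigma(j_{k+1})$ for $k = 1, \ldots, c-1$ pairs every surviving item of $j$ with a $\pref_i$-at-least-as-good item of $i$, so $i$ weakly prefers her bundle to $j$'s truncated bundle under every additive $u_i$ consistent with $\pref_i$, yielding EF1. Dummies cause no trouble since they are ranked strictly below every real item under $\pref_N'$, so each matched pair is real-real or dummy-dummy and stripping dummies preserves the comparison. For ex-ante equivalence, $\sum_t \lambda_t P_t' = q'$ by Birkhoff, and summing across each agent's representatives recovers $r$; because dummies are ranked last, PS consumes all of $O$ before touching $D$, so the restriction of $r$ to $O$ coincides with the PS outcome on the original instance $(N, O, \pref)$, which is exactly the returned allocation $q$.
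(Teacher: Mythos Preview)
Your argument is correct and rests on the same core observation as the paper's proof: if a permutation matrix $P'$ is consistent with the bistochastic matrix $q'$, then for all $i,j\in N$ and $k<l$ one has $\sigma(i_k)\succsim_i \sigma(j_l)$. The paper records this as its Claim~(1). Where you diverge is in how EF1 is extracted. The paper proves an additional Claim~(2)---that within each round the assignment admits no trading cycle, hence arises from some serial-dictatorship order---so as to conclude that every $P'$ is a genuine \emph{RB-allocation}, and then invokes the known fact that RB-allocations are (strongly) EF1. You instead go straight from the cross-round comparison to EF1 via the shift bijection $\sigma(i_k)\mapsto\sigma(j_{k+1})$, which is exactly the argument underlying ``RB $\Rightarrow$ EF1'' but applied directly. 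Your route is shorter for the theorem as stated; the paper's detour through Claim~(2) buys the stronger conclusion that each deterministic allocation in the support is an RB-allocation and hence satisfies \emph{strong} EF1, which it uses later.

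One small inaccuracy: your assertion that ``each matched pair is real-real or dummy-dummy'' is not quite right---$(\sigma(i_{c-1}),\sigma(j_c))$ can be real-dummy---but this is harmless since the dummy sits on $j$'s side and contributes zero after stripping, so the comparison still goes through. Also, your running-time tally of $(cn)^2$ iterations times $O((cn)^{2.5})$ per matching actually gives $O((cn)^{4.5})$ rather than $O((cn)^4)$; the paper makes the same claim with the same per-iteration bound, so this is not a divergence from the paper but a shared looseness.
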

	\begin{proof}
		Algorithm~\ref{algo:EFEF1} works as follows.
If $m<n$, we set $D=\{d_1,\ldots, d_{n-m}\}$. If $m> n$,  
 we set $D=\{d_1,\ldots,d_{cn -m}\}$. 
 We are now in a position to fix a new allocation instance $I'=(N',O',\pref')$ that only uses ordinal preferences. The item set $O'$ is $O \cup D$ where $|O'|=cn$. The `representative' set $N'$ is $\{i_1,\ldots, i_{c}\midd i\in N\}$. 
Note that the number of representatives $|N'|$ is equal to the number of items $|O'|$. The preferences are consistent with the underlying preference profile. The preferences $\pref'$ of the representatives are set as follows: for all $o,o'\in O$ and for all $i_j$ for $j\in \{1,\ldots, c\}$ $o \pref_{i_j}' o$ iff $o \pref_i o$. For all $o\in O$ and $d\in D$, $o \succ_{i_j}' d$. All the ties in $\pref'$ are broken lexicographically.

 

Note that for the modified allocation problem instance $I'$, an allocation has a corresponding allocation in the original instance $I$: an agent $i$ gets all the allocations of its representatives $i_1,\ldots i_c$. The allocation of dummy items is ignored.

%
Let $q'$ be the allocation as a result of applying PS with agent set $N$ and item set $O'$, but for each $j=0$ to $c-1$, we change the name of each agent $i$ to $i_{j+1}$ in time interval $[j,j+1]$. 
Note that computing $r$ and $q'$ takes time $({cn})^2$.
The allocation has a corresponding bistochastic matrix in which the rows correspond to the representatives and the columns correspond to the items. Each entry in the matrix represents the amount of the corresponding item eaten by the corresponding representative. 

Note that since $q'$ is bistochastic, a permutation matrix $P_k'$ consistent with $q'$ exists by Birkhoff's theorem. 
We want to show that any such matrix $P_k'$ must correspond to an RB-allocation of items in $O'$ to agents in $N$.  
The RB-allocation is viewed as proceeding in rounds. In each round, each of the representatives representing the $n$ agents pick a most preferred available item. In the $j$-th round, the representatives involved are $1_j,\ldots, n_j$. In any $P_k'$, each item is allocated to an agent representative and each agent representative gets one item. In order to establish that $P_k'$ is an RB-allocation of $N$, it is sufficient to prove two claims: (1) no representative agent strictly prefers any item picked in a later round; and (2) within each round, the items allocated to the representative agents are as a result of sequential allocation. 

Claim (1) follows from the fact that no representative $i_j$ strictly prefers any item allocated in a later round. The reason is that when it stopped eating in its turn, it was always eating an item at least as preferred as in later rounds.

Next, we prove Claim (2). Consider any round in which each representative receives one item. We claim that no set of representatives want to reallocate the items given in that round to get an improvement for all representatives in the set. Suppose for contradiction there is a trading cycle in which every agent in the cycle improves: $o_1,1, o_2, 2, \ldots, o_j,j$. 
Representative $1$ prefers item $o_2$ over $o_1$ which means that it started eating $o_1$ after $o_2$ was finished. Since $1$ ate a strictly positive fraction of $o_1$, it implies that $o_1$ finishes strictly after $o_2$. 
By a similar argument each $i\in \{1,\ldots j-1\}$ wants to get $o_{i+1}$ which means that it started eating $o_i$ after $o_{i+1}$ was finished. 
Agent $j$ prefers item $o_1$ over $o_j$ which means that it started eating $o_j$ after $o_1$ was finished which means that $o_j$ finishes strictly after $o_1$.
But then the order of the items according to the finishing times is: $o_1, o_j, o_{j-1},\ldots, o_3, o_2, o_1$. We have shown that $o_1$ has two different finishing times which is a contradiction. Since there exists no trading cycle for representatives in the same round, we know that the items in the round can be allocated as a result of sequential allocation.


From the two claims above, the allocation $P_k'$ is an RB-allocation for agents in $N$ if each agent gets the allocations of its representatives. 
Since any permutation matrix consistent with $q'$ also corresponds to an RB-allocation, we can use $P_k'$ as one of the permutation matrices in which $q'$ is decomposed during Birkhoff's decomposition. We can continue decomposing $q'$ into permutation matrices until we can represent $q'$ by a convex combination of  at most $K\leq ({cn})^2$ permutation matrices $P_1',\ldots, P_K'$. Each permutation matrix in the decomposition can be computed by computing a perfect matching in a corresponding bipartite graph via the Hopcroft-Karp-Karzanov algorithm which takes time $O(({cn})^{2.5})$.

Finally, note that we can convert allocations $(q',P_1',\ldots, P_K')$ for instance $I'$ into the corresponding allocations $(q,P_1,\ldots, P_K)$ for instance $I$. We do so by removing the dummy items and for each $i\in N$, giving the allocations of all the representatives $i_1,\ldots, i_c$ to agent $i$. Note that $q$ is the outcome of running PS on instance $I$. Also, $P_1,\ldots, P_K$ are RB-allocations for instance $I$ and hence EF1 for instance $I$.
		\end{proof}

		\begin{remark}
		Algorithm~\ref{algo:EFEF1} is combinatorial algorithm that computes a lottery over at most $({cn})^2\leq {(m+n)}^2$ deterministic allocations. By Carath{\'{e}}odory's Theorem, any $n\times m$ random allocation that is represented by a convex combination of a given $K$ deterministic allocations, can also be represented by at most $nm+1$ deterministic allocations among the $K$ deterministic allocations. We can reduce the support of the lottery returned by Algorithm~\ref{algo:EFEF1} to one involving at most $nm+1$ deterministic EF1 and SD-efficient allocations as follows. By using Gaussian elimination, we compute the
subset of the set of matrices $\{P_1,\ldots, P_k\}$ that forms the basis of $P_1,\ldots, P_k$. We can then compute a convex combination of the matrices in the basis to achieve the same outcome $q$.		
		%
		%
			\end{remark}
		
We note that whereas our algorithm provides a way to implement PS by EF1 allocations, not every implementation of the PS outcome may satisfy ex-post EF1. For example, consider the case of two agents with identical preferences over two items. In that case, tossing a coin and then giving both items to one agent is ex-ante equivalent to the PS outcome. However, it is not EF1 if agents have strictly positive utilities for both items. 		

Algorithm~\ref{algo:EFEF1}  bears similarities to the exponential-time  Algorithm~1 (Recursive PS) of \citet{FSV20a}. Just like their algorithm, we make agents successively eat one unit of items. Unlike the algorithm of \citeauthor{FSV20a}, we derive the lottery decomposition only after the PS outcome has been computed. In contrast, \citeauthor{FSV20a} probabilistically generate a partial deterministic allocation after each unit time. Their algorithm ``branches out into a polynomial number of subinstances'' a polynomial number of times which makes it an exponential-time algorithm. In order to ensure polynomial-time computability, they resort to a result by \citet{GLS81a} about convex polytopes and separation oracles. \citet{KKN17a} also look at lottery implementation of the PS rule. However, their focus is on allocations with single-unit demand, in which case, any balanced deterministic allocation in the support is trivially EF1. 

Next, we present a simple example showing how our algorithm works. The example has the same preference profile as Example 2 of \citet{FSV20a}.

\begin{example}
	
Consider the example in which $N=\{1,2\}$, $O=\{a,b,c,d\}$ and 
the agents have the following preferences over the four items.
\begin{align*}
	u_1(a)>u_1(b)>u_1(c)>u_1(d)\\
	u_2(a)>u_2(c)>u_2(b)>u_2(d)
	\end{align*}
	
	The exact utilities do not matter since our algorithm only takes into account the underlying ordinal preferences of the agents. 
If we run the PS algorithm, we get the following outcome:
	
	\begin{center}$r=$
	\begin{blockarray}{ccccccccccc}
		&&\matindex{$a$}&\matindex{$b$}& \matindex{$c$}& \matindex{$d$}&\\
	    \begin{block}{c(cccccccccc)}
			\matindex{$1$}& &$\nicefrac{1}{2}$&$1$&$0$&$\nicefrac{1}{2}$\\
			\matindex{$2$}& &$\nicefrac{1}{2}$&$0$&$1$&$\nicefrac{1}{2}$\\
	    \end{block}
	  \end{blockarray}
	  \end{center}
	  
	  Since $m$ is a multiple of $n$, $D=\emptyset$ and hence $O'=O\cup D=O$.
We now show how to achieve our desired lottery to achieve the PS outcome. 
We  run the PS rule on $(N,O',\pref')$ to get allocation $r$. For the outcome, for each agent's bundle, we let successive representative agents to eat exactly one unit of items one by one to get the the following allocation. 

\begin{center}
	$q'=$
\begin{blockarray}{ccccccccccc}
	&&\matindex{$a$}&\matindex{$b$}& \matindex{$c$}& \matindex{$d$}&\\
    \begin{block}{c(cccccccccc)}
		\matindex{$1_1$}& &$\nicefrac{1}{2}$&$\nicefrac{1}{2}$&$0$&$0$\\
		\matindex{$2_1$}& &$\nicefrac{1}{2}$&$0$&$\nicefrac{1}{2}$&$0$\\
		\matindex{$1_2$}& &$0$&$\nicefrac{1}{2}$&$0$&$\nicefrac{1}{2}$\\
		\matindex{$2_2$}& &$0$&$0$&$\nicefrac{1}{2}$&$\nicefrac{1}{2}$\\
    \end{block}
  \end{blockarray}
  \end{center}
  
  The unique decomposition of $q'$ is

 \begin{center}
	 $\frac{1}{2}$
 \begin{blockarray}{ccccccccccc}
 	&&\matindex{$a$}&\matindex{$b$}& \matindex{$c$}& \matindex{$d$}&\\
     \begin{block}{c(cccccccccc)}
 		\matindex{$1_1$}& &$1$&$0$&$0$&$0$\\
 		\matindex{$2_1$}& &$0$&$0$&$1$&$0$\\
 		\matindex{$1_2$}& &$0$&$1$&$0$&$0$\\
 		\matindex{$2_2$}& &$0$&$0$&$0$&$1$\\
     \end{block}
   \end{blockarray}
   $+~~~ \frac{1}{2}$\begin{blockarray}{ccccccccccc}
 	&&\matindex{$a$}&\matindex{$b$}& \matindex{$c$}& \matindex{$d$}&\\
     \begin{block}{c(cccccccccc)}
 		\matindex{$1_1$}& &$0$&$1$&$0$&$0$\\
 		\matindex{$2_1$}& &$1$&$0$&$0$&$0$\\
 		\matindex{$1_2$}& &$0$&$0$&$0$&$1$\\
 		\matindex{$2_2$}& &$0$&$0$&$1$&$0$\\
     \end{block}
   \end{blockarray}
   \end{center}
   
   Translating these for our original instance we get the following decomposition.
  
 \begin{center} $q~~= ~~\frac{1}{2}$
	\begin{blockarray}{ccccccccccc}
		&&\matindex{$a$}&\matindex{$b$}& \matindex{$c$}& \matindex{$d$}&\\
	    \begin{block}{c(cccccccccc)}
			\matindex{$1$}& &$1$&$1$&$0$&$0$\\
			\matindex{$2$}& &$0$&$0$&$1$&$1$\\
	    \end{block}
	  \end{blockarray}
	 $+~~~\frac{1}{2}$
  	\begin{blockarray}{ccccccccccc}
  		&&\matindex{$a$}&\matindex{$b$}& \matindex{$c$}& \matindex{$d$}&\\
  	    \begin{block}{c(cccccccccc)}
  			\matindex{$1$}& &$0$&$1$&$0$&$1$\\
  			\matindex{$2$}& &$1$&$0$&$1$&$0$\\
  	    \end{block}
  	  \end{blockarray}
  	  \end{center}
\end{example}

We also note that if we assume negative utilities instead of positive utilities, we still get SD-EF for PS. For the ex-post guarantee, we get ex-post EF2. 

\section{Additionally Achieving Efficiency}

In this section, we consider the additional issue of efficiency. Before, we proceed, we present some definitions. 

\paragraph{Efficiency Properties}

A random allocation $p$ is \emph{fractional Pareto optimal (fPO)} if there exists no other random allocation $q$ such that $u_i(q_i)\geq u_i(p_i)$ for all $i\in N$ and $u_i(q_i)> u_i(p_i)$ for some $i\in N$.
A deterministic allocation $p$ is \emph{Pareto optimal (PO)} if there exists no other deterministic allocation $q$ such that $u_i(q_i)\geq u_i(p_i)$ for all $i\in N$ and $u_i(q_i)> u_i(p_i)$ for some $i\in N$.
A random allocation $p$ is \emph{SD-efficient} is there exists no random allocation $q$ such that $q_i \succsim_i^{SD} p_i$ for all $i\in N$ and $q_i \succ_i^{SD} p_i$ for some $i\in N$. 
An allocation $p$ is \emph{weakly SD-efficient} is there exists no allocation $q$ such that $q_i \succ_i^{SD} p_i$ for all $i\in N$.
Note that fPO implies PO which implies SD-efficiency which in turn implies weak SD-efficiency. Just as in the case of fairness, we will consider efficiency of both the ex-ante random allocation as well as efficiency properties of the ex-post deterministic allocations that are involved in the lottery.

\medskip

We note that the random allocation maximizing the Nash social welfare is well-known to be equivalent to the competitive equilibrium with equal incomes solution~(see e.g., \citep{Vari74a,DPSV08a}) and satisfies fPO as well as ex-ante envy-freeness. However, due to Theorem 3 of ~\citet{FSV20a}, a rule that is fPO and ex-ante envy-free cannot be 
ex-post EF1.


\medskip

\medskip

Since the outcome returned by Algorithm~\ref{algo:EFEF1} is a lottery implementation of the PS rule outcome, our algorithm also inherits all the desirable ex-ante properties that the PS rule and its outcome are known to satisfy. Note that Algorithm~\ref{algo:EFEF1} first breaks ties in the ordinal preferences before running the PS algorithm. This results in the outcome satisfying weak SD-efficiency rather than SD-efficiency if there are indeed ties in the original preferences. If we care about SD-efficiency, then we do not artificially break any ties and can run the extended probabilistic serial (EPS) algorithm of \citet{KaSe06a}. The EPS algorithm makes coordinated choices for agents to eat one of their most preferred items and uses parametric network flows to compute the outcome. For number of items $m\geq n$, the algorithm takes time $O(m^3\log m)$.\footnote{The orginal EPS algorithm of  \citet{KaSe06a} is presented for the case of single-unit demands. However, it can easily be extended to the case of multiple items (see e.g., the Controlled Cake Eating Algorithm (CCEA) algorithm~\citep{AzYe14a}). CCEA is described in the context of cake cutting with piecewise constant valuations. It also applies to allocation of items: each cake segment can be treated as a separate item. } 

The exact specification of our \emph{EPS-Lottery algorithm} is to take the PS-Lottery algorithm and replace Step~\ref{step:PS} with the following step: Run EPS on instance $(N,O',\pref_N'')$ to get a random outcome $r$. Here, the preference profile $\pref''$ is the same as $\pref'$ except that only ties within $D$ are broken lexicographically and ties are within $O$ are not broken. Therefore the returned outcome $r$ and hence $q'$ is SD-efficient rather than just weak SD-efficient. The argument of implementing the outcome with EF1 deterministic allocations remains unchanged. The running time is unchanged as well as the bottleneck step is to compute a Birkhoff decomposition which takes time $O((cn)^4)$.

Note that if a random allocation $q$ is SD-efficient, then in any decomposition of $q$, each of the deterministic allocations is SD-efficient as well. The reason is that if one of the deterministic allocations is not SD-efficient, then $q$ is not SD-efficient. 
Hence, our algorithm additionally achieves SD-efficiency both ex-ante and ex-post.

\begin{theorem}\label{th:epslottery}
Let $c=\ceil{m/n}$.
The EPS-Lottery Algorithm runs takes time $O(({cn})^4)$  and computes a lottery over at most $({cn})^2\leq (m+n)^2$ deterministic EF1 allocations that is equivalent to the outcome of the extended probabilistic serial algorithm (which is SD-envy-free and SD-efficient). 
	\end{theorem}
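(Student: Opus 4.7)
The plan is to mirror the proof of the PS-Lottery theorem with the sole modification that the PS call is replaced by a call to EPS on $(N,O',\pref_N'')$, where $\pref''$ preserves ties within $O$ and only breaks ties involving the dummies $D$ lexicographically. Since the EPS outcome on the expanded instance still assigns each representative exactly one unit of items (and $|O'|=cn=|N'|$), the re-eating allocation $q'$ remains bistochastic and admits a Birkhoff decomposition $q'=\sum_{j=1}^{K}\lambda_j P_j'$ with $K\leq (cn)^2-2cn+2$. Translating back by deleting dummy columns and collapsing each representative onto its agent produces the claimed lottery $q=\sum_{j=1}^{K}\lambda_j P_j$.

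The correctness argument proceeds in three steps. First, I would show that $q$ is exactly the EPS outcome on the original instance $(N,O,\pref)$: because every real item is strictly preferred to every dummy under $\pref''$, EPS on $(N,O',\pref_N'')$ consumes all of $O$ before touching $D$, so projecting away the dummies yields the same random allocation as EPS on $(N,O,\pref)$. Second, I would reprove EF1 of each $P_j$ by showing that the grouped permutation corresponds to an RB-allocation for $N$: (i) no representative $i_j$ strictly prefers an item picked in a later round, since under EPS it was always eating an item at least as preferred as any item still present at time $j$; and (ii) within a single round no strict-preference trading cycle of representatives can exist, because tracing such a cycle around the items would force a single item to have two distinct finishing times under EPS. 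Third, ex-ante SD-envy-freeness and ex-ante SD-efficiency are inherited directly from the known guarantees of EPS, and ex-post SD-efficiency of each $P_j$ then follows because any deterministic allocation SD-dominating some $P_j$ would, upon substitution into the lottery, yield a random allocation SD-dominating the SD-efficient $q$.

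The main obstacle is the finishing-time argument for (ii) under EPS. Unlike PS with lexicographic tie-breaking, EPS may have several representatives eating a commonly most-preferred item at simultaneously adjusted rates, so one must verify carefully that every item in $O'$ still has a single well-defined finishing time and that a representative only commences eating an item $o$ once no strictly preferred item remains available. Both properties are intrinsic to the parametric-flow construction of EPS, and once established they let the cyclic contradiction run verbatim. For the complexity bound, EPS itself runs in $O(m^{3}\log m)$ time which is absorbed into the Birkhoff stage, whose $O((cn)^{2})$ perfect-matching computations yield the claimed $O((cn)^{4})$ total running time, matching the analysis of the PS-Lottery algorithm.
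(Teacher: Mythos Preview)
Your proposal is correct and matches the paper's approach: the paper gives no separate proof for this theorem and simply states, in the text preceding it, that one replaces PS by EPS on $(N,O',\pref_N'')$, that ``the argument of implementing the outcome with EF1 deterministic allocations remains unchanged,'' and that the Birkhoff decomposition step remains the $O((cn)^4)$ bottleneck; ex-post SD-efficiency is then derived from ex-ante SD-efficiency exactly as you do. If anything you are more careful than the paper, since you explicitly flag the need to check that under EPS every item still has a single well-defined finishing time and that eating of an item begins only after all strictly preferred items are exhausted---points the paper leaves implicit in the phrase ``remains unchanged.''
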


We note that our algorithm does not achieve ex-post Pareto optimality. 
One approach to achieving ex-post PO and ex-post EF1 is to check certain random allocations for these properties. Next, we show for an arbitrary random allocation, checking whether it is ex-post EF1 and ex-post Pareto optimal is NP-hard. 

\begin{theorem}
For $n$ agents and $n$ items, checking whether a given random allocation can be implemented by a lottery over EF1 and Pareto optimal allocations is NP-hard.
		For $n$ agents and $n$ items, checking whether a given random allocation can be implemented by a lottery over SD-EF1 and Pareto optimal allocations is NP-hard.
	\end{theorem}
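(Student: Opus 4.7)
The plan is to reduce from a strongly NP-complete numerical partition problem (such as \PD) by constructing an allocation instance with $n$ agents and $n$ items whose only deterministic EF1 and Pareto optimal allocations encode certificates of the source instance, and then exhibiting a specific random allocation whose lottery decomposition into such allocations is equivalent to a solution of the source instance.

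Given an instance of \PD with weights $w_1,\ldots,w_k$ summing to $2S$, I would introduce two ``free'' agents (Alice and Bob) together with a block of dummy agents; the items would split into $k$ ``weighted items'' whose values reflect the $w_i$, and a block of ``auxiliary items'' paired one-to-one with the dummy agents. The utilities are engineered so that, in every Pareto optimal deterministic allocation, each dummy agent must receive her matched auxiliary item (e.g.\ by giving dummies large utility only on their matched item and Alice/Bob zero utility on auxiliary items), leaving the only freedom as the two-way split of the weighted items between Alice and Bob. With an appropriate calibration of the weighted utilities (via padding with low-value common items or by scaling), EF1 between Alice and Bob is forced to become equivalent to the split being a balanced partition summing to exactly $S$ on each side.

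The target random allocation $p$ is then defined by tying the auxiliary items deterministically to their dummy agents and splitting each weighted item $\nicefrac{1}{2}$ between Alice and Bob. A lottery decomposition of $p$ over EF1+PO deterministic allocations thus amounts to writing this symmetric fractional assignment as a convex combination of deterministic balanced splits: the ``if'' direction takes any valid partition $(A,B)$ of \PD and uses the two symmetric allocations $(A,B)$ and $(B,A)$ to realize $p$ as a uniform lottery, while the ``only if'' direction is immediate because any decomposition must place positive mass on at least one EF1+PO deterministic allocation, which then certifies a valid balanced partition.

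The main obstacle is the calibration step: native EF1 is a ``up to one item'' slack condition and is a priori weaker than the zero-tolerance balance required by \PD. Closing this gap requires engineering the weights so that the only EF1-compatible splits are the exactly balanced ones (for instance, by suitable padding or by arranging that the one-item slack is strictly less than the minimum attainable nonzero difference of subset sums). For the SD-EF1 variant, I would break all ties in the constructed utilities via tiny lexicographic perturbations so that SD-EF1 coincides with EF1 on the relevant deterministic allocations, and the same reduction then yields NP-hardness of the SD-EF1 version.
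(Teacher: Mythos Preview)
Your construction is incompatible with the hypothesis of the theorem, which is that the number of agents equals the number of items. In your instance there are $2$ distinguished agents (Alice, Bob), $d$ dummy agents, $k$ weighted items, and $d$ auxiliary items; the constraint $n=m$ forces $k=2$, which is not enough to encode a nontrivial partition instance. More importantly, the very feature you rely on---Alice and Bob jointly absorbing all $k$ weighted items---cannot survive EF1 when $n=m$ and all utilities are strictly positive: if Alice and Bob together hold $k\ge 3$ items then some agent receives at least two items while another receives none, and removing one item from the larger bundle still leaves the empty-handed agent envious. So in the $n=m$ regime EF1 already forces every agent to receive exactly one item, and there is no room for a ``split the weighted items'' gadget at all. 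The calibration obstacle you flag (making the one-item slack in EF1 coincide with exact balance) is thus not the real difficulty; the real difficulty is that your reduction shape is inconsistent with $n=m$.

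The paper exploits precisely this collapse. With $n$ agents, $n$ items, and strictly positive utilities, a deterministic allocation is EF1 (equivalently, SD-EF1) if and only if it is balanced, so the set of EF1$+$PO deterministic allocations coincides with the set of balanced$+$PO deterministic allocations. Hardness then follows immediately from the known result of \citet{AMXY15a} that deciding whether a random allocation is a lottery over balanced Pareto optimal allocations is NP-hard. No new gadget or calibration is needed; the whole argument is this one-line equivalence plus a citation.
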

	\begin{proof}
		\citet{AMXY15a} proved that for $n$ agents and $n$ items, checking whether a given random allocation can be implemented by a lottery over balanced Pareto optimal allocations is NP-hard. Their setting assumed ordinal preferences but it works as well for any cardinal preferences consistent with the ordinal preferences. We consider utility functions $u_i$ consistent with ordinal preference $\succ_i$ and assume that $u_i(o)>0$ for all $o\in O$.
Since $u_i(o)>0$ for all $o\in O$, we know that in any unbalanced deterministic allocation one agent $i\in N$ gets zero items and another agent $j$ gets at least two items. Even if one of $j$'s items is removed, $i$ will be envious of $j$. Hence, an unbalanced allocation is not EF1. In the other direction, a balanced allocation gives one item to each agent. Even if an agent $i\in N$ is envious of agent $j$, agent $i$ will not be envious if $j$'s item is removed. We have established that for $n$ agents and $n$ items, the set of deterministic EF1 allocations is equal to the set of deterministic balanced allocations. 
Therefore, the set of deterministic EF1 and Pareto optimal allocations is equivalent to the set of deterministic balanced and Pareto optimal allocations. It follows that checking whether a given random allocation can be implemented by a lottery over EF1 and Pareto optimal allocation is NP-hard.

The same argument also works for the problem of checking whether a given random allocation can be implemented by a lottery over \emph{SD-EF1} and Pareto optimal allocations.
		\end{proof}

\section{Impossibility Results}

We first recall that \citet{FSV20a} proved that even for two agents, ex-ante fPO, ex-ante envy-freeness, and 
ex-post EF1 are incompatible. In this section, we present a couple of more impossibility results. The results are logically incomparable to the main impossibility result of \citet{FSV20a}.
Our first impossibility is the following one. 

\begin{theorem}\label{th:imp}
	Ex-ante SD-EF, ex-post EF1, and ex-post PO are incompatible even for 2 agents.
	\end{theorem}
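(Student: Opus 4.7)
The plan is to prove the impossibility by exhibiting a specific 2-agent instance and arguing by enumeration that no random allocation satisfies all three properties simultaneously. First, I would fix two agents over a small item set (say $m=3$ or $m=4$), together with a strict ordinal preference profile and carefully tuned cardinal utilities. The cardinals need to be chosen so that Pareto optimality and EF1 are in tension: some natural EF1 allocations are Pareto-dominated by unbalanced allocations that themselves fail EF1.

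Second, I would translate ex-ante SD-envy-freeness into linear inequalities on the fractional allocation matrix $p=[p_{i,o}]$. For two agents with strict preferences, each SD-EF condition is a cumulative prefix-mass constraint $\sum_{o'\succsim_i o} p_{i,o'} \ge \sum_{o'\succsim_i o} p_{j,o'}$ for each threshold $o$ in agent $i$'s list. Because $p_{1,o}+p_{2,o}=1$, these typically pin down several coordinates of $p$ (often forcing an exactly $1/2$ split on items tied as top between the two agents, together with equality constraints on prefixes).

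Third, I would enumerate the $2^m$ deterministic allocations and isolate the subset $\mathcal{F}$ that is simultaneously ex-post EF1 and ex-post Pareto optimal under the chosen utilities. The crux of the proof is then to show that the convex hull $\mathrm{conv}(\mathcal{F})$ does not intersect the SD-EF feasible region derived in step two. The cleanest way to carry this out is to exhibit a single linear functional (for instance, the probability of one specific item going to one specific agent, or a weighted sum of a prefix of entries) that takes the same integer value on every allocation in $\mathcal{F}$ but must take a strictly different fractional value on every SD-EF random allocation. This yields the separation between the SD-EF polytope and $\mathrm{conv}(\mathcal{F})$ and hence the incompatibility.

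The main obstacle is the utility tuning in the first step. The cardinals must be arranged so that (i) ex-post PO excludes the very EF1 allocations that one would otherwise use to build an SD-EF lottery, while (ii) ex-post EF1 simultaneously eliminates the extreme unbalanced PO allocations that would otherwise restore mixing power. I expect the construction to exploit a setting in which every PO+EF1 deterministic allocation assigns one particular item entirely to the same agent, whereas SD-envy-freeness forces that item to be split strictly between the two agents; this single-coordinate mismatch then closes the argument. Since the claim is logically incomparable with the FSV impossibility (which uses ex-ante fPO and ex-ante EF instead), the example should be tailored specifically to exploit the strength of SD-EF over ordinary EF.
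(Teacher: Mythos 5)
Your plan has the right shape---and in fact it is exactly the shape of the paper's proof: a single 2-agent instance in which every ex-post EF1 and PO allocation gives one particular item entirely to one agent, while ex-ante SD-EF forces that item to be split $\nicefrac{1}{2}$--$\nicefrac{1}{2}$, so no lottery over EF1$+$PO allocations can implement any SD-EF random allocation. But the proposal stops precisely where the proof begins: you never exhibit the instance, and you yourself flag ``the utility tuning'' as the main unresolved obstacle. As written there is no verification that such cardinal utilities exist, that PO$+$EF1 really pins the coordinate to an integer value, or that SD-EF really forces it to be fractional; the convex-hull/separating-functional language is machinery around a construction that is missing. That missing construction is the entire content of the theorem, so this is a genuine gap rather than a presentational one.

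For the record, the paper's instance is $O=\{a,b_1,b_2,b_3\}$ with $u_1=(7,1,1,1)$ and $u_2=(4,2,2,2)$, where $b_1,b_2,b_3$ are identical (so, contrary to your assumption, the profile is not strict---though a small perturbation to strict preferences also works). SD-EF forces $p_{1,a}=p_{2,a}=\nicefrac{1}{2}$, hence any implementing lottery contains an ex-post allocation giving $a$ to agent $2$. In such an allocation EF1 forces agent $1$ to hold at least two $b$ items (otherwise, even after removing one item from agent $2$'s bundle, agent $1$ envies), and then trading those two $b$'s for $a$ is a Pareto improvement: agent $1$ gains $7-2=5$ and agent $2$ loses $4$ but gains $2+2=4$. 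The numerical conditions doing the work are $u_1(a)>2u_1(b)$ together with $u_2(a)\le 2u_2(b)$; this is the ``tuning'' your sketch leaves open. Note also that once you have this, no enumeration of all $2^m$ allocations or polytope-separation argument is needed---the single coordinate $p_{2,a}$ closes the argument directly, which is how the paper phrases it.
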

	\begin{proof}
			Consider the example in which $N=\{1,2\}$, $O=\{a,b_1,b_2,b_3\}$ and 
			the agents have the following utilities over four items.

				\begin{center}
				\setlength{\tabcolsep}{6pt}
				\begin{tabular}{ccccccccc}
					& $a$& $b_1$& $b_2$ & $b_3$\\
					\midrule
					$1$ &$7$&$1$&$1$&$1$\\
					$2$ &$4$&$2$&$2$&$2$\\
				\end{tabular}
			\end{center}
			
			The three items $b_1,b_2,b_3$ are identical items that we refer to as $b$ items. 
		Ex-ante SD-EF implies that each agent in expectation gets $1/2$ of $a$ and $1.5$ units of type $b$ items. 
Our first claim is that in any lottery implementing such an ex-ante SD-EF allocation, there is at least one ex-post allocation in which agent $2$ must get item $a$. This follows from the fact that agent $2$ gets $1/2$ of $a$ in expectation.
		
Our second claim is that in any deterministic 	ex-post EF1 and ex-post PO allocation, agent $2$ cannot get item $a$.	
Suppose for contradiction that agent $2$ gets $a$. Then, EF1 requires that agent $1$ gets at least 2 items of type $b$. But then, agent $1$ can exchange these two items for $a$ to obtain a Pareto improvement. 

From the two claims above, it follows that for the problem instance, there exists no lottery over ex-post EF1 and ex-post PO outcomes that implements the SD-EF random outcome.  
		\end{proof}

	Next, we point out that ex-ante fPO and ex-ante SD-EF are incompatible even for 2 agents.
	The theorem follows directly from Theorem 5 of \citet{AzYe14a} but we re-prove it in our context for the sake of completeness. 
	
	\begin{theorem}\label{cor:imp}
		Ex-ante fPO and ex-ante SD-EF  are incompatible even for 2 agents.
		\end{theorem}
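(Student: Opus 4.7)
The plan is to exhibit a tiny two-agent instance on which SD-envy-freeness pins down the random allocation uniquely, and then show that this forced allocation is Pareto-dominated by another fractional allocation for a suitably chosen cardinal utility profile. Concretely, I would take $N=\{1,2\}$, $O=\{a,b\}$, and set $u_1(a)=2$, $u_1(b)=1$, $u_2(a)=100$, $u_2(b)=1$, so that both agents share the strict ordinal ranking $a\spref b$.

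The first step is to prove that every ex-ante SD-envy-free random allocation $p$ on this instance must satisfy $p_{1,a}=p_{2,a}=p_{1,b}=p_{2,b}=\tfrac{1}{2}$. Applying the SD-envy-freeness condition at the upper-contour set $\{a\}$ from agent $1$'s viewpoint gives $p_{1,a}\ge p_{2,a}$, while from agent $2$'s viewpoint it gives $p_{2,a}\ge p_{1,a}$; combined with $p_{1,a}+p_{2,a}=1$, this forces $p_{1,a}=p_{2,a}=\tfrac{1}{2}$. Applying the same two inequalities at the upper-contour set $\{a,b\}$ and subtracting off the just-derived equality at $\{a\}$ forces $p_{1,b}=p_{2,b}=\tfrac{1}{2}$ as well.

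The second step is to expose a fractional Pareto improvement. Under the forced allocation, $u_1(p_1)=\tfrac{3}{2}$ and $u_2(p_2)=\tfrac{101}{2}$. I would then write down the fractional allocation $q$ defined by $q_{1,a}=\tfrac{1}{4}$, $q_{2,a}=\tfrac{3}{4}$, $q_{1,b}=1$, $q_{2,b}=0$, and compute $u_1(q_1)=\tfrac{1}{4}\cdot 2+1=\tfrac{3}{2}=u_1(p_1)$ and $u_2(q_2)=\tfrac{3}{4}\cdot 100=75>\tfrac{101}{2}=u_2(p_2)$. Thus $q$ weakly dominates $p$ for agent $1$ and strictly improves agent $2$, certifying that $p$ is not fractional Pareto optimal. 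Since the only candidate for an ex-ante SD-envy-free allocation on this instance is $p$, and $p$ fails fPO, the two properties are incompatible already at $n=2$.

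There is no real obstacle here; the only design choice is to arrange the cardinal utilities so that one agent values the high-ranked item much more intensely than the other while keeping the ordinal rankings identical (which is what triggers the uniqueness of $p$). The asymmetric intensity $u_2(a)/u_2(b)\gg u_1(a)/u_1(b)$ is precisely what creates room for a mutually beneficial reallocation of probability mass on $a$ away from the uniform split, so essentially any such choice would work.
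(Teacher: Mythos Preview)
Your proposal is correct and follows essentially the same approach as the paper: two agents, two items with identical strict ordinal rankings so that SD-envy-freeness forces the uniform $\tfrac12$ allocation, and then asymmetric marginal rates of substitution make that allocation fail fPO. The only cosmetic difference is that you fix concrete numbers and exhibit an explicit Pareto-improving fractional allocation, whereas the paper keeps the utilities parametric (requiring only $u_1(a)/u_1(b)\neq u_2(a)/u_2(b)$) and invokes the standard fPO trade condition; your version is arguably cleaner since it sidesteps that characterization entirely.
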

	
	\begin{proof}
	Consider the following two-agent profile.
	
		\begin{center}
		\setlength{\tabcolsep}{6pt}
		\begin{tabular}{ccccccccc}
			& $a$& $b$\\
			\midrule
			$1$ &$u_1(a)$&$u_1(b)$\\
			$2$ &$u_2(a)$&$u_2(b)$\\
		\end{tabular}
	\end{center}

Consider an SD-EF and ex-ante PO allocation $p$.
Suppose $u_1(a), v^{1}_b,  u_2(a), u_2(b) > 0$ in such a way that $u_1(a) > u_1(b)$ and $u_2(a) > u_2(b)$ and $\frac{u_1(a)}{u_1(b)} > \frac{u_2(a)}{u_2(b)}$. 
Due to SD-EF, the outcome should be
		\begin{center}$p=$
		\begin{blockarray}{ccccccccccc}
			&&\matindex{$a$}&\matindex{$b$}\\
		    \begin{block}{c(cccccccccc)}
				\matindex{$1$}& &$\nicefrac{1}{2}$&$\nicefrac{1}{2}$&\\
				\matindex{$2$}& &$\nicefrac{1}{2}$&$\nicefrac{1}{2}$&\\
		    \end{block}
		  \end{blockarray}
		  \end{center}
On the other hand, in order for the mechanism to be ex-ante fPO,  $p_{1,b}=0$ or $p_{2,a}=1$. 
	\end{proof}

	
		\section{Binary Utilities}
		
		We assumed that the agents have additive utilities. If we consider the case in which agents have 1-0 utilities, we can achieve stronger results. We show that our EPS-lottery algorithm satisfies very strong properties when agents have 1-0 utilities. In order to ensure ex-ante efficiency of the EPS-lottery algorithm under 1-0 utilities, we can assume that agents do not consume zero utility items and leave them for the consumption by other agents as is done by the Controlled Cake Eating Algorithm (CCEA) algorithm of \citet{AzYe14a}.
In case this leads to unbalanced allocations, we can make the allocation balanced by adding appropriate number of \emph{extra} dummy items so that we can implement our lottery decomposition algorithm for a balanced allocation.

Before we proceed, let us recall the definition of leximin optimality.
For an allocation $\pi$ we denote by $\vec{u}(\pi) \in \mathbb{R}^n$ the vector of the utilities in $\pi$ sorted in increasing order.
For two vectors $\vec{u}, \vec{v} \in \mathbb{R}^k$, we say that $\vec{u}$ leximin-dominates $\vec{v}$, written $\vec{u} \succ_{lex} \vec{v}$, if there exists an $i \leq k$ such that $\vec{u}_j = \vec{v}_j, \forall j < i$, and $\vec{u}_i > \vec{v}_i$. Finally, $\pi$ is leximin-optimal if there is no $\pi'$ such that $\vec{u}(\pi') \succ_{lex} \vec{u}(\pi)$.

Under 1-0 utilities, it is known that the following rules are equivalent and polynomial-time computable:
(1) leximin rule (2) maximum Nash welfare (MNW) rule (3) competitive equilibirum with equal incomes (CEEI)~\citep{Vari74a} and (4) Controlled Cake Eating Algorithm (CCEA) rule~\citep{AzYe14a} (which can be viewed as an extension for EPS for multi-unit demands that 
is also careful about zero utilities). For example, CEEI and MNW are well-known to be equivalent even for general additive utilities. Under binary utilities, leximin, CEEI, and CCEA are equivalent~\citep{AzYe14a}. 
CCEA satisfies envy-freeness. The conclusion about envy-freeness is also derived from the fact that CEEI outcomes are envy-free (see, e.g. \citet{Vazi07a}).
It is well-known that under additive utilities, the utility profile of the agents is unique (see, e.g., \citet{Vazi07a}). 

For 1-0 utilities, the rules above are known to be ex-ante group-strategyproof (no group of agents can misreport their preferences so that all agents get at least as much utility and at least one agent gets strictly more utility). This fact has been known before as well (see, e.g., \citet{BoMo04a}, \citet{KaSe06a} and \citet{AzYe14a}). Since the rules are equivalent to the leximin rule, the outcome is by definition leximin optimal and hence ex-ante fPO.

We have already shown that an outcome of the EPS rule can be implemented by a lottery over EF1 allocations. Also, every deterministic allocation consistent with the SD-efficient random outcome is SD-efficient (follows from Lemma 2 of \citet{KaSe06a}) and hence ex-post Pareto optimal for binary utilities. Therefore, we achieve ex-post EF1 and ex-post Pareto optimality. 

\begin{theorem}
	For binary utilities, the EPS-Lottery Algorithm is group-strategyproof, ex-ante fPO, ex-post fPO, ex-ante envy-free, and ex-post EF1. Its outcome is ex-ante equivalent to the leximin random allocation as well as the maximum Nash welfare allocation.
	\end{theorem}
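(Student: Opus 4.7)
The plan is to assemble the theorem from four largely independent building blocks, each of which the paper has already set up. First I would argue that, under binary utilities and with the modification that agents skip zero-utility items (as in CCEA), the random outcome produced in Step~\ref{step:PS} of the EPS-Lottery Algorithm coincides with the CCEA random allocation. Then I would invoke the equivalence CCEA = CEEI = MNW = leximin for binary utilities, citing \citet{AzYe14a}, to conclude that the ex-ante random outcome is equal to the leximin and MNW allocation. This immediately delivers the last sentence of the theorem, and also yields ex-ante fPO (leximin is fPO by definition) and ex-ante envy-freeness (CEEI outcomes are envy-free).

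Next I would handle group-strategyproofness ex-ante. Since the ex-ante outcome equals the leximin/CEEI/CCEA allocation under binary utilities, I would simply cite the known group-strategyproofness results for these equivalent rules under $0/1$ valuations (e.g.\ \citet{BoMo04a}, \citet{KaSe06a}, \citet{AzYe14a}). The only subtlety is that group-strategyproofness is a property of the \emph{ex-ante} random allocation rule, so the lottery decomposition performed in the later steps of the algorithm is irrelevant for this part. I would also note, using the uniqueness of the utility profile of the MNW/CEEI allocation under additive utilities (see \citet{Vazi07a}), that group-strategyproofness transfers unambiguously to any lottery implementation.

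For the ex-post guarantees, I would reuse the work already done earlier in the paper. Ex-post EF1 follows verbatim from \thmref{th:epslottery}: the EPS-Lottery Algorithm decomposes its ex-ante allocation into RB-allocations, and every RB-allocation is EF1. For ex-post fPO, I would use the chain: the ex-ante allocation is SD-efficient (since it equals the leximin allocation, or directly because EPS outputs an SD-efficient random allocation), hence by Lemma~2 of \citet{KaSe06a} every deterministic allocation in any decomposition of the ex-ante random outcome is SD-efficient, and under binary utilities SD-efficiency collapses to (fractional) Pareto optimality. Since each $P_j$ in the Birkhoff decomposition is consistent with $q$, each $P_j$ is ex-post fPO.

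The main obstacle I anticipate is the technical match-up in the first paragraph: showing that the EPS-Lottery Algorithm, which was designed for strict preferences over all items and then run on the lexicographically refined profile $\pref''$, still produces the CCEA outcome when agents have binary utilities. The paper signals the fix (agents refuse to consume zero-utility items, and if the resulting allocation is unbalanced one pads with extra dummy items so that Birkhoff's decomposition can still be applied), so the argument I would give is: after this padding, the sequence of eating rates and stopping events in EPS coincides exactly with CCEA on the real items, the dummy items absorb only the slack, and the Birkhoff decomposition step is unaffected because the padded matrix is again bistochastic. Once this correspondence is established, everything else is a citation.
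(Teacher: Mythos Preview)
Your proposal is correct and follows essentially the same route as the paper: invoke the equivalence CCEA $=$ CEEI $=$ MNW $=$ leximin under binary utilities \citep{AzYe14a} to obtain ex-ante envy-freeness, ex-ante fPO, and the leximin/MNW equivalence; cite \citet{BoMo04a,KaSe06a,AzYe14a} for group-strategyproofness; reuse \thmref{th:epslottery} for ex-post EF1; and use Lemma~2 of \citet{KaSe06a} to push SD-efficiency down to each deterministic allocation in the decomposition, noting that SD-efficiency coincides with Pareto optimality for binary utilities. You have also correctly anticipated the one genuine technical wrinkle---making the EPS step match CCEA by having agents skip zero-utility items and padding with extra dummies to restore bistochasticity---which is exactly the fix the paper signals before stating the theorem.
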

	
The theorem above recovers some results that have been proved by \citet{HPPS20a} including their Theorem 4 and Corollary 1.
Our method of achieving a lottery over EF1 and ex-post fPO allocations is different. In their paper, they achieve the EF1 lottery by invoking the bihierarchy framework introduced by \citet{BCKM12a}. 


\section{Conclusion}

We studied the problem of simultaneously achieving desirable fairness properties ex-post and ex-ante. Our main contribution is an algorithm to find a lottery over EF1 allocations that is ex-ante equivalent to the outcome of the (E)PS rule. We noted that we actually compute a lottery over RB-allocations that satisfy strong EF1.  
Figure~\ref{fig:relations} depicts the logical relations between various properties. It also shows some sets of properties that are possible or not possible to satisfy simultaneously.

\begin{figure}[h!]
    \begin{center}
        \scalebox{0.7}{
\begin{tikzpicture}
  \tikzstyle{onlytext}=[]

  \tikzset{venn circle/.style={circle,minimum width=0mm,fill=#1,opacity=0.1}}

  \node[onlytext] (ex-ante fair) at (3,0) {\begin{tabular}{c}\textbf{ex-ante}\\\textbf{fairness}\end{tabular}};

  \draw[-, line width=1pt] (-1,-1) -- (11,-1) ;

  \node[onlytext] (ex-ante SD-EF) at (3,-2) {\begin{tabular}{c}{ex-ante}\\{SD-EF}\end{tabular}};
    \node[onlytext] (ex-ante EF) at (3,-4) {\begin{tabular}{c}{ex-ante}\\{EF}\end{tabular}};

    \node[onlytext] (ex-post fair) at (0,0) {\begin{tabular}{c}\textbf{ex-post}\\\textbf{fairness}\end{tabular}};

    \node[onlytext] (ex-post SD-EF1) at (0,-2) {\begin{tabular}{c}{ex-post}\\{SD-EF1}\end{tabular}};
      \node[onlytext] (ex-post EF1) at (0,-4) {\begin{tabular}{c}{ex-post}\\{EF1}\end{tabular}};

 \node[onlytext] (ex-ante fair) at (6,0) {\begin{tabular}{c}\textbf{ex-ante}\\\textbf{efficiency}\end{tabular}};

   \node[onlytext] (ex-ante fPO) at (6,-2) {\begin{tabular}{c}{ex-ante}\\{fPO}\end{tabular}};
   \node[onlytext] (ex-ante SD-eff) at (6,-4) {\begin{tabular}{c}{ex-ante}\\{SD-eff}\end{tabular}};

   \node[onlytext] (ex-post fair) at (10,0) {\begin{tabular}{c}\textbf{ex-post}\\\textbf{efficiency}\end{tabular}};

   \node[onlytext] (ex-post fPO) at (10,-2) {\begin{tabular}{c}{ex-post}\\{fPO}\end{tabular}};
      \node[onlytext] (ex-post PO) at (10,-4) {\begin{tabular}{c}{ex-post}\\{PO}\end{tabular}};
   \node[onlytext] (ex-post SD-eff) at (10,-6) {\begin{tabular}{c}{ex-post}\\{SD-eff}\end{tabular}};

\draw[->, line width=1pt] (ex-ante SD-EF) -- (ex-ante EF) ;
\draw[->, line width=1pt] (ex-post SD-EF1) -- (ex-post EF1) ;
\draw[->, line width=1pt] (ex-ante fPO) -- (ex-ante SD-eff) ;
\draw[->, line width=1pt] (ex-ante fPO) -- (ex-post fPO) ;

\draw[->, line width=1pt] (ex-post fPO) -- (ex-post PO) ;
\draw[->, line width=1pt] (ex-post PO) -- (ex-post SD-eff) ;

\draw[->, line width=1pt] (ex-ante SD-eff) -- (ex-post SD-eff) ;

    \draw [line width=20pt,opacity=0.2,green,line cap=round,rounded corners] (-0.5,-2) -- (4,-2) -- (5,-4) -- (7,-4) -- (9,-6) -- (10.5,-6) ;

 \draw[-,pink, dotted, line width=1pt, rounded corners, line cap=round] (2,-2.4) -- (7,-2.4) -- (7,-1.5) -- (2,-1.5) -- (2,-2.4);

 \draw [line width=20pt,opacity=0.05,color=red,line cap=round,rounded corners] (-0.5,-4) -- (4,-4) -- (5,-2) -- (7,-2) -- (10.5,-2) ;

 \draw[-,pink, solid, line width=1pt, rounded corners, line cap=round] 
 (-0.75,-3.9) -- (-0.75,-4.4) -- (0.5,-4.4) -- (2.5,-2.5) -- (4,-2.5) -- (9.5,-4.4) -- (10.7,-4.4) -- (10.7,-3.6) -- (9.3,-3.6) -- (3.6, -1.6) -- (2, -1.6) -- (-0.75,-3.9);


  \draw [line width=20pt,opacity=0.2,green,line cap=round,rounded corners] (2.5,-7) -- (3.5,-7);
  \node[onlytext] () at (5,-7) {\footnotesize Theorem~\ref{th:epslottery} };
 
  \draw [line width=20pt,opacity=0.05,color=red,line cap=round,rounded corners] (2.5,-8) -- (3.5,-8);
  \node[onlytext] () at (5.3,-8) {\footnotesize Theorem 3~\citep{FSV20a}};

  %
  
  \draw[-,pink, solid, line width=1pt, rounded corners, line cap=round] (2,-9) -- (4,-9) -- (4,-9.5) -- (2,-9.5) -- (2,-9);
     \node[onlytext] () at (5.05,-9.25) {\footnotesize Theorem~\ref{th:imp}};
	 
     \draw[-,pink, dotted , line width=1pt, rounded corners, line cap=round] (2,-10) -- (4,-10) -- (4,-10.5) -- (2,-10.5) -- (2,-10);
        \node[onlytext] () at (5.05,-10.25) {\footnotesize Theorem~\ref{cor:imp}};

  \node[onlytext] () at (1.6,-6.3) {Key};
       \draw[-, line cap=round] (1,-6) -- (7,-6) -- (7,-11) -- (1,-11) -- (1,-6);
  
 
\end{tikzpicture}
 }
\end{center}
\caption{\label{fig:relations} Logical relations between fairness and efficiency concepts.
An arrow from (A) to (B) denotes that (A) implies (B). 
The properties in green are simultaneously satisfied by our algorithm. The combined properties in the pink shapes (dotted, dashed, or shaded) are impossible to simultaneously satisfy. 
}
\end{figure}

We noted that under 1-0 utilities, all meaningful ex-ante and ex-post fairness and efficiency properties are simultaneously satisfied. Coming back to general additive utilities, we recall that our algorithm achieves ex-ante SD-efficiency and ex-ante SD-EF. If we wish to replace ex-ante SD-efficiency with ex-ante fPO, then such an algorithm does not exist in view of Theorem~\ref{cor:imp}. 
Again, note that our algorithm achieves ex-post SD-efficiency, ex-ante SD-EF, and ex-post SD-EF1. Even if we weaken ex-post SD-EF1 to ex-post EF1 but strenghten ex-post SD-efficiency to ex-post PO, we again get an impossibility~(Theorem~\ref{th:imp}).
%


		\section*{Acknowledgements}
		Aziz is supported by a UNSW Scientia Fellowship, and Defence Science and Technology (DST) under the project ``Auctioning for distributed multi vehicle planning'' (DST 9190). He thanks Ethan Brown and Dominik Peters for helpful comments.


%

\bibliographystyle{plainnat}


\newpage

\appendix

\section{Probabilistic Serial (PS) Algorithm}

	We write the formal definition of PS from \citep{Koji09a} as an algorithm. For any $o\in O'\subset O$, let $N(o,O')=\{i\in N\midd o\succ_i b \text{  for every } b\in O'\}$ be the set of agents whose most preferred house in $O'$ is $o$. Let $\max_{N}(O')$ denote $\{o\in O' \midd \exists i\in N \text{ s.t. } o=\max_{\succ_i}(O')\}$.
PS is defined as Algorithm~\ref{algo:subroutine}.

	\begin{algorithm}[htb]
	  \caption{PS}
	  \label{PS}
	\renewcommand{\algorithmicrequire}{\wordbox[l]{\textbf{Input}:}{\textbf{Output}:}}
	 \renewcommand{\algorithmicensure}{\wordbox[l]{\textbf{Output}:}{\textbf{Output}:}}
	\begin{algorithmic}
		\REQUIRE $(N,O,\succ)$
		\ENSURE $p$ the random assignment returned by PS
	\end{algorithmic}
	\algsetup{linenodelimiter=\,}
	  \begin{algorithmic}[1]

	\STATE $j\longleftarrow 0$ ($j$ is the stage of the algorithm)
	\STATE $O^0\longleftarrow O$; $t^0\longleftarrow 0$; $p_{i,o}^0\longleftarrow 0$ for all $i\in N$ and $o\in O$.

	\WHILE{$O^j\neq \emptyset$}
	\FOR{$o\in \max_{N}(O^j)$}
	\STATE $t^{j+1}(o)\longleftarrow \frac{1-\sum_{i\in N} p_{i,o}^j}{|N(o,O^j)|}$
	\COMMENT{finishing times of items that are being eaten}
	\ENDFOR
	\STATE $t^{j+1}\longleftarrow \min_{o\in \max_{N}(O^j)}t^{j+1}(o)$
	\COMMENT{the next time point in which some item is finished}
	\FOR{all $i\in N$ and $o\in O$}
	\IF{$ i\in N(o,O^j)$}
	\STATE  $p_{i,o}^{j+1} \longleftarrow  p_{i,o}^{j}+t^{j+1}-t^j $
	\ELSE
	\STATE $ p_{i,o}^{j+1} \longleftarrow  p_{i,o}^{j} $
	\ENDIF
	
	\ENDFOR
	\STATE $O^{j+1} \longleftarrow O^j\setminus \{o\in \max_{N}(O^j)\midd t^j(o)=t^j\}$
	\COMMENT{Latest eaten items are removed}
	\STATE $j\longleftarrow j+1$
	\ENDWHILE

	\RETURN $p=p^j$

	 \end{algorithmic}
	\label{algo:subroutine}
	\end{algorithm}

\newpage
\section{Birkhoff's Algorithm}

	\begin{algorithm}[htb]
	  \caption{Birkhoff's Algorithm}
	  \label{PS}
	\renewcommand{\algorithmicrequire}{\wordbox[l]{\textbf{Input}:}{\textbf{Output}:}}
	 \renewcommand{\algorithmicensure}{\wordbox[l]{\textbf{Output}:}{\textbf{Output}:}}
	\begin{algorithmic}
		\REQUIRE Bistochastic square matrix $M$ with some non-zero entries and rows $R$ and columns $C$.
		\ENSURE Permutation matrices $P_1,\ldots P_K$ for some $K$ such that $\sum_{i=1}^k \lambda_iP_i=M$. 
	\end{algorithmic}
	\algsetup{linenodelimiter=\,}
	  \begin{algorithmic}[1]

		  \STATE Initalize $i$ to $0$; initalize $M$ to $M'$. 
		  \WHILE{$M'$ has a non-zero entry}
		\STATE $i\longleftarrow i+1$
		  \STATE Compute a permutation matrix $P_i$ consistent with $M'$ as follows. 
		 \COMMENT{One can use the Hopcroft-Karp-Karzanov algorithm to compute $P_i$ as  follows:
Compute a perfect matching $L_i$ in a bipartite graph $(R\cup C,E)$ where $(i,j)\in E$ iff $M_{i,j}'>0$. Let $P_i$ be the  permutation matrix corresponding to matching $L_i$.}
	\STATE $M$ is set to $M-\lambda_i{P_i}$ where $\lambda_i$ is the smallest non-zero entry in $P_i$.
	\ENDWHILE	
\RETURN $P_1,\ldots ,P_i$
\end{algorithmic}
	\label{algo:birkhoff}
	\end{algorithm}

		\end{document}